\newtheoremstyle{example}
{}                % Space above
{}                % Space below
{\sffamily}        % Theorem body font % (default is "\upshape")
{}                % Indent amount
{\bfseries}       % Theorem head font % (default is \mdseries)
{.}               % Punctuation after theorem head % default: no punctuation
{ }               % Space after theorem head
{}                % Theorem head spec
\theoremstyle{theorem}
\newtheorem{theorem}{Theorem}
\newtheorem{lemma}[theorem]{Lemma}
\theoremstyle{example}
\theoremstyle{definition}
\newtheorem{definition}[theorem]{Definition}
\newtheorem{remark}[theorem]{Remark}
\numberwithin{theorem}{section}
\numberwithin{theorem}{section}
\DeclareMathOperator{\Hom}{Hom}
\DeclareMathOperator{\Gal}{Gal}
\DeclareMathOperator{\Aut}{Aut}
\newcommand{\R}{\mathbb{R}}
\newcommand{\Q}{\mathbb{Q}}
\newcommand{\C}{\mathbb{C}}
\newcommand{\F}{\mathbb{F}}
\definecolor{keywordcolor}{rgb}{0.7, 0.1, 0.1}   % red
\definecolor{tacticcolor}{rgb}{0.0, 0.1, 0.6}    % blue
\definecolor{commentcolor}{rgb}{0.4, 0.4, 0.4}   % grey
\definecolor{symbolcolor}{rgb}{0.0, 0.1, 0.6}    % blue
\definecolor{sortcolor}{rgb}{0.1, 0.5, 0.1}      % green
\definecolor{attributecolor}{rgb}{0.7, 0.1, 0.1} % red
\title{Formalising the Krull Topology in Lean}
\author{Sebastian Monnet}
\begin{document}
\begin{abstract}
    The Galois group of an infinite Galois extension has a natural topology, called the \emph{Krull topology}, which has the important property of being profinite. It is difficult to talk about Galois representations, and hence the Langlands Program, without first defining the Krull topology. We explain our formalisation of this topology, and our proof that it is profinite, in the Lean 3 theorem prover. 
\end{abstract}
\maketitle

\section{Introduction}
The \emph{Langlands Program} is one of the largest and most ambitious projects in modern mathematics. The program essentially says that there is a correspondence between Galois representations and automorphic forms. Galois representations are required to be \emph{continuous}, which means that it is difficult to define them, let alone state the Langlands conjectures, without first defining the appropriate \emph{Krull topology} on Galois groups. Recent work in \cite{maria} has formalised the ad\`eles of a global field, paving the way towards the automorphic side of the Langlands philosophy. Meanwhile, we have formalised the Krull topology, laying groundwork for the Galois-theoretic side. 

One interesting feature of our formalisation is that we define the Krull topology for \emph{all} field extensions, without requiring them to be Galois. This is unusually general, and it inspired us to think about how far the abstraction could go. We found that category theory provides a natural language to express the idea in greater generality. 

The structure of the paper is as follows. In Section~\ref{section-maths}, we recap the relevant mathematics by defining field extensions, Galois groups, and the Krull topology, as well as explaining what it means for this topology to be profinite. We conclude Section~\ref{section-maths} by explaining informally a proof of profiniteness. Subsequently, in Section~\ref{section-lean}, we explain our implementation of this definition and proof, building on Lean's mathematics library, {\tt mathlib}. 

Concretely, our contributions are as follows. We created a new file, \lstinline{field_theory/krull_topology}, in {\tt mathlib}, which currently contains the definition of the Krull topology and proofs that it is Hausdorff and totally disconnected. An overview of this file's contents can be found at: \url{https://leanprover-community.github.io/mathlib_docs/field_theory/krull_topology.html}. At the time of writing, our proof that the Krull topology is profinite has not yet been merged into {\tt mathlib}, and the most recent version can be found in the Pull Request at \url{https://github.com/leanprover-community/mathlib/pull/13307}.

\section{Mathematical Preliminaries}
\label{section-maths}
We summarise the key mathematical theory underlying our work, starting with field extensions,  Galois groups, and the Krull topology, which are familiar to most number theorists. Once we have defined these objects, we move on to explaining the language of \emph{filters}, which give a convenient framework for dealing with topology. Finally, we apply these filters to prove that the Krull topology is profinite. 
\subsection{Field Theory and Galois Theory}
A \emph{field extension} $L/K$ is a pair of fields $K$ and $L$, such that $K$ is a subset of $L$. For example, the field $\Q$ of rational numbers is a subset of the real numbers $\R$, so $\R/\Q$ is a field extension. Similarly, $\C/\R$ is a field extension, where $\C$ denotes the complex numbers. For a field extension $L/K$, an element $\alpha \in L$ is said to be \emph{algebraic over $K$} if it is a root of some nonzero polynomial with coefficients in $K$. If $\alpha \in L$ is algebraic over $K$, then there is a unique monic polynomial $f_\alpha(X)\in K[X]$ of least degree such that $f_\alpha(\alpha) = 0$. This polynomial is called the \emph{minimal polynomial} of $\alpha$ over $K$. A field extension $L/K$ is said to be \emph{algebraic} if every element of $L$ is algebraic over $K$. The extension $\C/\R$ is algebraic, but $\R/\Q$ is not. 

Two important properties of some field extensions are \emph{normality} and \emph{separability}. The extension $L/K$ is said to be \emph{normal} if it is algebraic and the minimal polynomial of each $\alpha \in L$ can be factorised into linear factors over $L$. The extension $\C/\R$ is normal. For an example of an extension that is algebraic but not normal, consider the field 
$$
\Q(\sqrt[3]{2}) = \{a + b 2^{1/3} + c 2^{2/3} : a,b,c\in \Q\}
$$
as an extension of $\Q$. The minimal polynomial of $\sqrt[3]{2}$ is $X^3 - 2$, which cannot be factorised into linear factors over $\Q(\sqrt[3]{2})$, since it has roots $e^{\pm 2\pi i/3}\sqrt[3]{2}$, and these are not in $\Q(\sqrt[3]{2})$. Meanwhile, an algebraic extension $L/K$ is said to be \emph{separable} if the minimal polynomial of each $\alpha \in L$ over $K$ splits into \emph{distinct} linear factors over an algebraic closure of $L$. All algebraic extensions of $\Q$ are separable, since each minimal polynomial is coprime to its derivative, meaning that it cannot have repeated roots in any extension. 

The classic example of a non-separable algebraic extension comes from the field 
$$
L = \frac{\F_p(T)[X]}{(X^p - T)},
$$
where $p$ is prime, $\F_p$ is the field with $p$ elements, and $\F_p(T)$ is the field of rational functions over $\F_p$. Write $\sqrt[p]{T}$ for the element of $L$ represented by $X \in \F_p(T)[X]$. As the notation suggests, intuitively $L$ is obtained from $\F_p(T)$ by adjoining a $p^\mathrm{th}$ root of $T$. Then $L/\F_p(T)$ is a non-separable algebraic extension, since the minimal polynomial of $\sqrt[p]{T}$ factorises as $(y - \sqrt[p]{T})^p$ over $L$. The extension $L/K$ is said to be \emph{Galois} if it is normal and separable. 

If $L/K$ is a field extension, then $L$ naturally has the structure of a vector space over $K$. We define the \emph{degree} of this extension to be the dimension of $L$ as a $K$-vector space, and we denote it by $[L : K]$. An intuitive example is $\C/\R$. Since the complex plane is a $2$-dimensional real vector space with basis $\{1, i\}$, the degree is $[\C : \R] = 2$. A slightly more involved example is $\Q(\sqrt[3]{2})/\Q$ from before, which has degree $3$, since it has basis $\{1, 2^{1/3}, 2^{2/3}\}$ over $\Q$. 

If $L/K$ is a field extension, then an \emph{intermediate field} of $L/K$ is another field $E$ such that $K \subseteq E \subseteq L$. We will also refer to intermediate fields as \emph{subextensions} of $L/K$. In the case where a subextension $F$ of $L/K$ is of finite degree over $K$, we will call it a \emph{finite subextension}. 

\begin{definition}
    Let $L/K$ be a field extension. A \emph{$K$-algebra homomorphism} $L \to L$ is a function $\sigma : L \to L$ satisfying the following three axioms:
    \begin{enumerate}
        \item $\sigma(x + y) = \sigma(x) + \sigma(y)$ for all $x,y \in L$,
        \item $\sigma(xy) = \sigma(x)\sigma(y)$ for all $x, y\in L$,
        \item $\sigma(x) = x$ for all $x \in K$.
    \end{enumerate}
    If moreover $\sigma:L\to L$ is a bijection, then it is called a \emph{$K$-algebra isomorphism}.
\end{definition}
\begin{definition}
    Let $L/K$ be any field extension. We define the \emph{Galois group} $\Gal(L/K)$ of $L/K$ to be the set of $K$-algebra isomorphisms $\sigma:L \to L$, which is a group under composition. 
\end{definition}
\begin{remark}
    It is slightly unconventional to define the Galois group of a field extension that is not a Galois extension. Usually, this object would be denoted $\Aut_K(L)$. However, in {\tt{mathlib}}, both objects are represented by the same notation, and all of our Lean definitions and results apply to non-Galois extensions. Since it will not matter to us whether an extension is Galois, we will use the notation $\Gal(L/K)$ for all extensions $L/K$. 
\end{remark}
In general, one may define a pair of maps 
$$
\begin{tikzcd}[column sep=large]
    \{\text{subgroups of $\Gal(L/K)$}\} \arrow[r, "H \mapsto L^H", shift left] & \{\text{intermediate fields of $L/K$}\}, \arrow[l, "\Gal(L/E) \mapsfrom E", shift left]
    \end{tikzcd}
$$
where
\begin{align*}
L^H &:= \{x \in L : \sigma(x) = x \text{ for all $\sigma \in H$}\}.
\end{align*}
Note that for an intermediate field $E$ of $L/K$, the group $\Gal(L/E)$ is indeed a subgroup of $\Gal(L/K)$, since an isomorphism of $L$ fixing $E$ certainly also fixes $K$. 

We call $L^H$ the \emph{fixed subfield} of $H$, since it consists of the elements of $L$ that are fixed by $H$. Similarly, when viewed as a subgroup of $\Gal(L/K)$, the group $\Gal(L/E)$ is called the \emph{fixing subgroup} of $E$, since it consists of the elements of $\Gal(L/K)$ fixing $E$.

One reason to care about Galois groups is the following theorem, which is a special case of \cite{MR2180311}, Page 120, Theorem 7.34.
\begin{theorem}[Fundamental Theorem of Galois Theory]
    If $L/K$ is a Galois extension of finite degree, then the maps $H\mapsto L^H$ and $E\mapsto \Gal(L/E)$ are mutually inverse bijections. 
\end{theorem}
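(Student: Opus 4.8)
The plan is to prove that the two maps are mutually inverse by verifying the two round-trip identities $\Gal(L/L^H) = H$ for every subgroup $H \leq \Gal(L/K)$, and $L^{\Gal(L/E)} = E$ for every intermediate field $E$. The backbone is the observation that both maps reverse inclusions and that the two ``easy'' inclusions $H \subseteq \Gal(L/L^H)$ and $E \subseteq L^{\Gal(L/E)}$ hold by definition: every $\sigma \in H$ fixes $L^H$, and every $x \in E$ is fixed by $\Gal(L/E)$. It therefore remains only to promote these inclusions to equalities, which I would do by a dimension count. I note at the outset that, since $L/K$ is finite, $\Gal(L/K)$ and all of its subgroups are finite, and that $L^H$ is genuinely an intermediate field of $L/K$ because $K \subseteq L^H$ (each $\sigma \in H \leq \Gal(L/K)$ fixes $K$).

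The first identity I would obtain from Artin's lemma: if $H$ is a finite group of automorphisms of $L$ with fixed field $L^H$, then $L/L^H$ is a finite Galois extension with $\Gal(L/L^H) = H$ and $[L : L^H] = |H|$. Applied to a subgroup $H \leq \Gal(L/K)$, this yields $\Gal(L/L^H) = H$ directly, completing the round trip $H \mapsto L^H \mapsto \Gal(L/L^H)$.

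For the second identity I would first record that $L/E$ inherits normality and separability from $L/K$, so $L/E$ is again Galois, and that for a finite Galois extension the order of the Galois group equals the degree, $|\Gal(L/E)| = [L:E]$. This degree count is where the Galois hypothesis is really used: separability guarantees that there are exactly $[L:E]$ embeddings of $L$ into an algebraic closure over $E$, and normality guarantees that every such embedding has image $L$, hence lies in $\Gal(L/E)$. Now I apply Artin's lemma to the finite group $\Gal(L/E)$, whose fixed field satisfies $[L : L^{\Gal(L/E)}] = |\Gal(L/E)| = [L:E]$. Combining the easy inclusion $E \subseteq L^{\Gal(L/E)} \subseteq L$ with the tower law $[L:E] = [L:L^{\Gal(L/E)}] \cdot [L^{\Gal(L/E)} : E]$ forces $[L^{\Gal(L/E)} : E] = 1$, that is, $L^{\Gal(L/E)} = E$.

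The main obstacle is not the final assembly, which is pure bookkeeping once the two numerical inputs are in hand, but establishing those inputs. Artin's lemma rests on the linear independence of distinct field characters (Dedekind's lemma) to bound $[L:L^H] \leq |H|$, together with the reverse estimate, while the equality $|\Gal(L/E)| = [L:E]$ rests on the theory of separable degree and the splitting behaviour forced by normality. In a formalisation these are the substantial lemmas; the Fundamental Theorem then follows by combining them with the order-reversing Galois connection described above.
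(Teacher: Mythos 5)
Your proof is correct, but there is no in-paper argument to compare it against: the paper states this theorem purely as background and delegates its proof to a citation (it is quoted as a special case of Theorem 7.34, page 120, of the reference given there). Your argument is the classical one for the finite-degree case, and its structure is sound and complete at this level of detail: the two easy inclusions $H \subseteq \Gal(L/L^H)$ and $E \subseteq L^{\Gal(L/E)}$, Artin's lemma giving $\Gal(L/L^H) = H$ and $[L : L^H] = |H|$ for any finite group $H$ of automorphisms, the inheritance of normality and separability by $L/E$ yielding $|\Gal(L/E)| = [L:E]$, and the tower-law count squeezing $L^{\Gal(L/E)}$ down to $E$. You also check the hypotheses that are easy to forget, namely that $H$ is finite and that each map lands in the claimed codomain. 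One small misattribution in your closing remark: Dedekind's linear independence of characters gives $|\Aut(L/F)| \le [L:F]$ for a finite extension $L/F$, hence the \emph{lower} bound $[L : L^H] \ge |H|$; it is the reverse inequality $[L : L^H] \le |H|$ that requires Artin's separate combinatorial argument (or the orbit-polynomial plus primitive-element route). Since you invoke Artin's lemma as a black box with its correct statement, this swap is harmless to the proof, but it is worth correcting if you ever expand the sketch into a full formalisation.
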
 
 
\subsection{The Krull Topology}

The Fundamental Theorem of Galois Theory breaks down for infinite Galois extensions. See \cite{conrad}, Examples 3.10 and 3.11 for counterexamples. To salvage the theorem, we define a topology on $\Gal(L/K)$.

Let $L/K$ be a Galois extension, possibly of infinite degree. Recall that a \emph{finite subextension of $L/K$} is an intermediate field $F$ such that $F/K$ is of finite degree. 
\begin{definition} 
    We define the \emph{Krull topology} on $\Gal(L/K)$ to be the topology generated by sets of the form 
    $$
    \sigma \Gal(L/F) := \{\sigma f : f \in \Gal(L/F)\},
    $$ 
    where $\sigma \in \Gal(L/K)$ and $F/K$ is a finite subextension of $L/K$. 
\end{definition} 
If cosets $\sigma \Gal(L/F_1)$ and $\tau \Gal(L/F_2)$ have nonempty intersection, then for every $\varphi \in \sigma \Gal(L/F_1) \cap \tau \Gal(L/F_2)$, we have 
$$
\varphi \in \varphi \Gal(L/F_1F_2) \subseteq \sigma \Gal(L/F_1) \cap \tau \Gal(L/F_2),
$$
where $F_1F_2$ is the smallest subfield of $L$ containing $F_1$ and $F_2$, which is also of finite degree over $K$. Therefore, the open sets of the Krull topology are precisely the unions of cosets of the form $\sigma \Gal(L/F)$ for finite subextensions $F$ of $L/K$. 
\begin{definition}
    A \emph{topological group} is a group $G$, equipped with a topology, such that the maps 
    \begin{align*}
        &G\times G \to G,\quad G \to G
        \\
        &(x,y) \mapsto xy, \quad\text{ } x \mapsto x^{-1}
    \end{align*}
    are both continuous. 
\end{definition}
In the remainder of Section~\ref{section-maths}, we will state several results whose proofs are elementary. Since the focus of the paper is on computation, we will omit these elementary proofs in order to spend more time discussing our implementation. 
\begin{lemma}
    Let $L/K$ be a (possibly infinite) Galois extension. The Krull topology makes $\Gal(L/K)$ into a topological group. 
\end{lemma}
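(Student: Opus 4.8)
The plan is to verify directly that the two maps are continuous, exploiting the fact that the subgroups $\Gal(L/F)$, as $F$ ranges over the finite subextensions of $L/K$, form a neighbourhood basis of the identity. Since we have already observed that every open set is a union of cosets $\sigma\Gal(L/F)$, it suffices, for continuity at a point, to take an arbitrary basic open neighbourhood of the relevant image and produce a neighbourhood of the point that is carried into it.

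The structural fact underpinning everything is the conjugation identity $\sigma\Gal(L/F)\sigma^{-1} = \Gal(L/\sigma(F))$, together with the observation that $\sigma(F)$ is again a finite subextension of $L/K$: since $\sigma$ fixes $K$ pointwise, $\sigma(F)$ is an intermediate field, and $\sigma$ restricts to a $K$-algebra isomorphism $F \to \sigma(F)$, so that $[\sigma(F):K] = [F:K] < \infty$. I would establish this identity first, as it is exactly the device that compensates for the fact that the subgroups $\Gal(L/F)$ need not be normal in $\Gal(L/K)$.

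For continuity of multiplication at a point $(\sigma,\tau)$, I would take an arbitrary basic neighbourhood $W = \sigma\tau\,\Gal(L/F)$ of the product and exhibit open neighbourhoods $U \ni \sigma$ and $V \ni \tau$ with $UV \subseteq W$. The choice is $U = \sigma\,\Gal(L/\tau(F))$ and $V = \tau\,\Gal(L/F)$; using the conjugation identity and that $\Gal(L/F)$ is a subgroup, one computes
\[
UV = \sigma\tau\,\bigl(\tau^{-1}\Gal(L/\tau(F))\,\tau\bigr)\Gal(L/F) = \sigma\tau\,\Gal(L/F)\Gal(L/F) = \sigma\tau\,\Gal(L/F) = W.
\]
Continuity of inversion at $\sigma$ is analogous: given a basic neighbourhood $\sigma^{-1}\Gal(L/F)$ of $\sigma^{-1}$, I would take $U = \sigma\,\Gal(L/\sigma^{-1}(F))$ and check that $U^{-1} = \Gal(L/\sigma^{-1}(F))\,\sigma^{-1} = \sigma^{-1}\Gal(L/F)$, again by conjugation.

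The calculations themselves are routine once the conjugation identity is available, so the main obstacle is conceptual rather than computational: because the $\Gal(L/F)$ are typically not normal, one cannot pull the \emph{same} finite subextension $F$ through both factors. The resolution is precisely to replace $F$ by its image under the appropriate element ($\tau(F)$ for multiplication, $\sigma^{-1}(F)$ for inversion), which the conjugation identity supplies while keeping us within the class of finite subextensions.
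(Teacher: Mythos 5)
Your proof is correct: the conjugation identity $\sigma\Gal(L/F)\sigma^{-1} = \Gal(L/\sigma(F))$ holds (and $\sigma(F)$ is again a finite subextension since $\sigma$ restricts to a $K$-isomorphism $F \to \sigma(F)$), the coset computations for multiplication and inversion go through exactly as you write them, and the reduction to basic neighbourhoods is legitimate because the cosets $g\Gal(L/F)$ form a basis of the topology. However, your route differs from the paper's. The paper does not verify continuity directly at all: it introduces the general machinery of \emph{group filter bases} (Definition~2.14), proves once and for all that any group filter basis induces a topological group structure, checks that $\mathcal{B} = \{\Gal(L/F) : F/K \text{ finite}\}$ satisfies the four group filter basis axioms (Lemma~2.16), and finally identifies the induced topology with the Krull topology. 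The key content is the same in both arguments --- the conjugation axiom $xVx^{-1} \subseteq U$ of a group filter basis is verified by precisely your identity, taking $V = \Gal(L/\sigma^{-1}(F))$ --- but the decomposition is different: your argument is self-contained and works pointwise with cosets, which makes it shorter on paper and arguably more transparent, whereas the paper's factorisation through an abstract neighbourhood-basis-at-the-identity theorem is what makes the Lean formalisation tractable, since {\tt mathlib}'s \lstinline{group_filter_basis.topology} supplies the continuity proofs for free and one only has to check the four algebraic axioms of Definition~2.14. In particular, your approach would require redoing the (elementary but fiddly) topological bookkeeping inside the proof assistant, which is exactly what the paper's design avoids.
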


The following theorem salvages the Fundamental Theorem of Galois Theory for infinite extensions. 
\begin{theorem}[Krull]
    The mappings $E \mapsto \Gal(L/E)$ and $H \mapsto L^H$ are mutually inverse bijections between intermediate fields of $L/K$ and closed subgroups of $\Gal(L/K)$. 
\end{theorem}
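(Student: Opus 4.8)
The plan is to exhibit the two assignments as mutually inverse bijections by verifying four facts, of which only the last is substantial. Throughout, let $E$ denote an intermediate field of $L/K$ and $H$ a subgroup of $\Gal(L/K)$. Two inclusions are immediate from the definitions: $E \subseteq L^{\Gal(L/E)}$, since every element of $E$ is fixed by every automorphism fixing $E$; and $H \subseteq \Gal(L/L^H)$, since every $h \in H$ fixes $L^H$ by the definition of the fixed field. I would also record at the outset that, because $L/K$ is Galois, so is $L/E$ for every intermediate field $E$, since both normality and separability are inherited by the upper extension of a tower. This guarantees the maps genuinely go between intermediate fields and (closed) subgroups.

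First I would check that $\Gal(L/E)$ is always closed, so that $E \mapsto \Gal(L/E)$ lands among closed subgroups. For a finite subextension $F$, the subgroup $\Gal(L/F)$ is the basic open coset $\id \cdot \Gal(L/F)$, hence open; and an open subgroup is automatically closed, since its complement is the union of its nonidentity cosets, each of which is open. As $E/K$ is algebraic, we have $\Gal(L/E) = \bigcap_F \Gal(L/F)$, the intersection ranging over finite subextensions $F$ of $E/K$, and an intersection of closed sets is closed.

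Next I would prove $L^{\Gal(L/E)} = E$, which simultaneously shows that $E \mapsto \Gal(L/E)$ is injective and that $H \mapsto L^H$ is a left inverse to it. Only the inclusion $L^{\Gal(L/E)} \subseteq E$ requires work: given $\alpha \in L \setminus E$, I must produce an element of $\Gal(L/E)$ moving $\alpha$. Since $L/E$ is Galois, the minimal polynomial of $\alpha$ over $E$ has degree at least two and splits into distinct linear factors over $L$, so it has a second root $\beta \neq \alpha$ in $L$; the $E$-isomorphism $E(\alpha) \to E(\beta)$ sending $\alpha \mapsto \beta$ then extends, by normality of $L/E$, to an element of $\Gal(L/E)$ that does not fix $\alpha$.

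The heart of the argument, and the step I expect to be the main obstacle, is the identity $\Gal(L/L^H) = H$ for closed $H$. I would instead prove the sharper statement $\Gal(L/L^H) = \overline{H}$ for an arbitrary subgroup $H$; the closed case is then immediate. The inclusion $\overline{H} \subseteq \Gal(L/L^H)$ follows from the previous step, as $\Gal(L/L^H)$ is closed and contains $H$. For the reverse inclusion, take $\sigma \in \Gal(L/L^H)$ and an arbitrary basic open neighbourhood $\sigma\Gal(L/F)$; enlarging $F$ (which only shrinks the neighbourhood), I may assume $F/K$ is finite and Galois, replacing $F$ by its normal closure, which is finite and Galois because $L/K$ is separable. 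Restriction gives a homomorphism $\Gal(L/K) \to \Gal(F/K)$, and here I would invoke the \emph{finite} Fundamental Theorem of Galois Theory for $F/K$: the image $\{h|_F : h \in H\}$ is a subgroup of $\Gal(F/K)$ whose fixed field in $F$ is exactly $F \cap L^H$, so it equals $\Gal(F/(F \cap L^H))$. Since $\sigma$ fixes $L^H$, the restriction $\sigma|_F$ fixes $F \cap L^H$ and hence lies in this image; thus some $h \in H$ satisfies $h|_F = \sigma|_F$, giving $\sigma^{-1}h \in \Gal(L/F)$ and so $h \in \sigma\Gal(L/F) \cap H$. As every basic neighbourhood of $\sigma$ meets $H$, we conclude $\sigma \in \overline{H}$, completing the proof that the two maps are mutually inverse.
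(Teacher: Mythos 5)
Your proof is correct, but note that the paper itself does not give an internal argument for this theorem: its ``proof'' is a one-line citation to Part (1) of Theorem I.2.8 in Berhuy's book. Your write-up is essentially the standard argument that such references contain, and it is soundly organized: you reduce everything to the two composite identities, prove $L^{\Gal(L/E)} = E$ via the splitting-and-extension argument (using that $L/E$ inherits normality and separability), and prove the sharper density statement $\Gal(L/L^H) = \overline{H}$ for an \emph{arbitrary} subgroup $H$ by passing to a finite Galois quotient $F/K$ (via the normal closure, which stays inside $L$ by normality of $L/K$) and invoking the finite Fundamental Theorem, which the paper has already stated. Every step checks out, including the topological preliminaries (open subgroups are closed; $\Gal(L/E)$ is an intersection of such subgroups because $E/K$ is algebraic) and the identification of the fixed field of the restricted subgroup $\{h|_F : h \in H\}$ with $F \cap L^H$. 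Two standard facts are silently invoked and would deserve an explicit citation in a fully rigorous write-up: the isomorphism-extension theorem (extending $E(\alpha) \to E(\beta)$ to an automorphism of $L$ requires Zorn's lemma when $L/E$ is infinite), and the nontrivial half of the finite Fundamental Theorem, namely $\Gal(F/F^{H'}) = H'$ for subgroups $H'$ of a finite Galois group (Artin's lemma). What your route buys over the paper's citation is a self-contained proof; as a bonus, the intermediate result $\Gal(L/L^H) = \overline{H}$ is strictly stronger than what the theorem asks for, since it computes the closure of every subgroup and thereby characterizes exactly which subgroups occur as fixing subgroups.
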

\begin{proof}
    This is Part (1) of \cite{berhuy_2010}, Theorem I.2.8. 
\end{proof}
\begin{definition}
    A \emph{profinite group} is a topological group that is isomorphic to the limit of an inverse system of finite groups, with their discrete topologies. 
\end{definition}
As $F$ ranges over finite normal subextensions $F/K$, the groups $\Gal(F/K)$ form an inverse system of finite topological groups, equipped with their discrete topologies. The restriction maps $\Gal(L/K)\to \Gal(F/K)$ make the group $\Gal(L/K)$ into the limit of this inverse system, in the category of topological groups. 

Therefore, $\Gal(L/K)$ is a profinite group. The following theorem gives a convenient explicit condition for profiniteness. 
\begin{theorem}
    \label{theorem-profinite-equiv-characterisations}
    A topological group is profinite if and only if its topology is compact, Hausdorff, and totally disconnected. 
\end{theorem}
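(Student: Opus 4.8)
The plan is to prove both implications, with the reverse direction (compact, Hausdorff, totally disconnected $\Rightarrow$ profinite) carrying essentially all of the difficulty. For the forward direction, suppose $G \cong \varprojlim_i G_i$ for an inverse system of finite discrete groups $G_i$. Each $G_i$, being finite and discrete, is compact, Hausdorff, and totally disconnected, and all three properties are inherited by the product $\prod_i G_i$: compactness by Tychonoff's theorem, and Hausdorffness and total disconnectedness because these pass to arbitrary products. The inverse limit is the subset of $\prod_i G_i$ cut out by the equations $\varphi_{ij}(x_j) = x_i$ for the transition maps $\varphi_{ij}$; since each $G_i$ is Hausdorff and the maps continuous, this is a closed subset, and a closed subspace of a compact Hausdorff totally disconnected space retains all three properties. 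Hence $G$ does too.

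For the reverse direction, the key is to manufacture enough open normal subgroups. First I would invoke the topological fact that in a compact Hausdorff space the connected component of a point coincides with its quasi-component, the intersection of all clopen sets containing it; since $G$ is totally disconnected, this forces the clopen sets to form a basis for the topology and, in particular, to separate $e$ from every other point. The main lemma is then that every open neighbourhood $U$ of $e$ contains an open subgroup. To prove it, I would choose a clopen $V$ with $e \in V \subseteq U$ and use continuity of multiplication together with compactness of $V$ to produce a symmetric open neighbourhood $W$ of $e$ with $W \subseteq V$ and $WV \subseteq V$; then $H = \bigcup_{n \geq 1} W^n$ is an open, hence clopen, subgroup contained in $V$. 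Compactness of $G$ forces $H$ to have finite index, and replacing $H$ by its core $\bigcap_{g \in G} gHg^{-1}$, a finite intersection since the index is finite, yields an open \emph{normal} subgroup of finite index inside $U$.

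Having produced a neighbourhood basis $\{N_i\}$ of $e$ consisting of open normal subgroups of finite index, I would form the finite discrete quotients $G/N_i$ and the canonical continuous homomorphism $\theta : G \to \varprojlim_i G/N_i$. Injectivity follows from $\bigcap_i N_i = \{e\}$, which holds because the $N_i$ form a neighbourhood basis of $e$ in the Hausdorff group $G$. For surjectivity I would show that $\theta$ has dense image, since every basic open set of the inverse limit is the image of a nonempty $N_i$-coset of $G$, and then use that $G$ is compact and the inverse limit Hausdorff to conclude that the image is simultaneously closed and dense, hence all of the target. Finally, a continuous bijection from a compact space to a Hausdorff space is automatically a homeomorphism, so $\theta$ is an isomorphism of topological groups and $G$ is profinite.

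I expect the genuine obstacle to lie in the reverse direction, in the interplay between the purely topological input, that compact Hausdorff totally disconnected spaces admit a basis of clopen sets, and the group-theoretic step of upgrading a clopen neighbourhood of $e$ to an open normal subgroup of finite index. The compactness hypothesis is used in an essential way at both points, both to extract the neighbourhood $W$ with $WV \subseteq V$ and to guarantee finiteness of the index, and it cannot be dispensed with.
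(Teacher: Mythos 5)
Your proposal is correct, but note that the paper does not actually prove this theorem at all: it simply cites Theorem 2 of Shatz's book on profinite groups. What you have written is, in effect, the standard argument contained in such references (Shatz, Ribes--Zalesskii, Wilson), so you have supplied the proof that the paper outsources. Both directions of your sketch are sound: the forward direction correctly reduces to the facts that compactness, Hausdorffness, and total disconnectedness all pass to products and to closed subspaces, with the limit cut out as a closed subset of $\prod_i G_i$ by the Hausdorffness of the targets; and the reverse direction correctly identifies the two essential inputs, namely the purely topological fact that components coincide with quasi-components in compact Hausdorff spaces (whence a basis of clopen sets), and the group-theoretic upgrade from a clopen neighbourhood of $e$ to an open normal subgroup of finite index via $H = \bigcup_{n \geq 1} W^n$ and the core $\bigcap_{g \in G} gHg^{-1}$. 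Two points deserve a word of care if you write this out in full: the density-plus-closedness argument for surjectivity of $\theta$ implicitly uses that the open normal subgroups form a directed system (closed under finite intersection), which is what lets you reduce a basic open set of the inverse limit to a condition at a single index; and continuity of $\theta$ should be checked, though it is immediate since each quotient map $G \to G/N_i$ has open fibres. Neither is a gap, just a detail to make explicit.
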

\begin{proof}
    This is Theorem 2 of \cite{shatz}.
\end{proof}
In {\tt mathlib}, a space is \emph{defined} to be profinite if it is compact, Hausdorff, and totally disconnected, so those are the conditions we proved.

\subsection{Filters and Filter Bases}
Our proof of profiniteness uses the language of filters, which we now explain. 
\begin{definition}
    A \emph{filter} on a set $X$ is a collection $\mathcal{F}$ of subsets of $X$, satisfying the following axioms: 
    \begin{enumerate}
        \item (Universality) $X \in \mathcal{F}$,
        \item (Upward closure) if $S \in \mathcal{F}$ and $S \subseteq T \subseteq X$, then $T \in \mathcal{F}$,
        \item (Closure under intersection) if $S, T \in \mathcal{F}$, then $S\cap T \in \mathcal{F}$. 
    \end{enumerate}
\end{definition}
\begin{definition}
    A \emph{filter bundle} on a set $X$ is a function 
    $$
    \mathcal{N} : X \to \{\text{filters on $X$}\}.
    $$
\end{definition}

\begin{lemma}
    \label{lem-top-induced-by-filters}
    Let $\mathcal{N}$ be a filter bundle on a set $X$. Define 
    $$
    \mathcal{T} = \{U \subseteq X : U \in \mathcal{N}(x) \text{ for all $x \in U$}\}.
    $$
    Then $\mathcal{T}$ is a topology on $X$. 
\end{lemma}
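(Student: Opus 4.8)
The plan is to verify directly that $\mathcal{T}$ satisfies the three axioms of a topology: it contains $\emptyset$ and $X$, it is closed under arbitrary unions, and it is closed under finite intersections. The key observation organising the whole argument is that each topology axiom is a pointwise consequence of the corresponding filter axiom applied at every $x$ in the relevant set. So the proof amounts to fixing an arbitrary point $x$ of a candidate open set and checking that the set lies in $\mathcal{N}(x)$, invoking exactly one filter axiom each time.

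First I would handle the two extremes. For $X \in \mathcal{T}$, I fix any $x \in X$ and apply the universality axiom to conclude $X \in \mathcal{N}(x)$; since $x$ was arbitrary, $X \in \mathcal{T}$. For $\emptyset \in \mathcal{T}$, the defining condition ``$\emptyset \in \mathcal{N}(x)$ for all $x \in \emptyset$'' is vacuously satisfied, so $\emptyset \in \mathcal{T}$ with no appeal to the filter axioms at all. I would flag this vacuous case explicitly, since it is the one spot where a reader (or a formaliser) might expect to need an axiom and in fact does not.

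Next I would treat the closure properties, which carry the real content. For an arbitrary union $U = \bigcup_{i \in I} U_i$ with each $U_i \in \mathcal{T}$, I fix $x \in U$, pick an index $j$ with $x \in U_j$, and use $U_j \in \mathcal{N}(x)$ together with $U_j \subseteq U$; upward closure then gives $U \in \mathcal{N}(x)$, and arbitrariness of $x$ yields $U \in \mathcal{T}$. For finite intersections it suffices to check the binary case (the empty intersection being $X$, already handled): given $U, V \in \mathcal{T}$ and $x \in U \cap V$, both $U \in \mathcal{N}(x)$ and $V \in \mathcal{N}(x)$ hold, so closure under intersection gives $U \cap V \in \mathcal{N}(x)$, whence $U \cap V \in \mathcal{T}$. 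The general finite case follows by an easy induction, or simply by noting the binary plus nullary cases suffice for the usual definition.

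I do not expect any genuine obstacle here: the statement is a clean one-to-one correspondence between filter axioms and topology axioms, and every step is a direct unwinding of definitions. The only points deserving care are the vacuous treatment of $\emptyset$ and the reduction of ``finite intersection'' to the binary and nullary cases; these are exactly the places where an informal proof tends to gloss over a detail that a formal proof must make explicit.
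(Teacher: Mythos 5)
Your proof is correct, and it is exactly the routine axiom-by-axiom verification that the paper itself omits as elementary (each topology axiom following from the corresponding filter axiom, with the empty set handled vacuously and finite intersections reduced to the binary case). Nothing is missing, and your two flagged points of care are precisely the details a formalisation would need to spell out.
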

The topology from Lemma~\ref{lem-top-induced-by-filters} is called the topology \emph{induced by $\mathcal{N}$}. 
We also introduce the concept of a filter basis, which is a reduced set of data that generates a filter. 
\begin{definition}
    A \emph{filter basis} on a set $X$ is a collection $\mathcal{B}$ of subsets of $X$, satisfying the following two axioms:
    \begin{enumerate}
        \item $\mathcal{B} \neq \varnothing$,
        \item for all $U,V\in\mathcal{B}$, there is some $W \in \mathcal{B}$ with $W \subseteq U \cap V$. 
    \end{enumerate}
\end{definition}
\begin{lemma}
    \label{lem-induced-filter-from-basis}
Let $\mathcal{B}$ be a filter basis on a set $X$. The collection 
$$
\mathcal{F} = \{U \subseteq X : D \subseteq U \text{ for some } D \in \mathcal{B}\} 
$$
is a filter on $X$. 
\end{lemma}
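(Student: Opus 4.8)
The plan is to verify the three filter axioms directly from the two filter-basis axioms, handling them one at a time. Each follows from a short chain of set inclusions, so the argument is routine; the only thing requiring attention is noticing \emph{which} basis axiom each filter axiom relies on, and the conceptual content of the lemma is precisely that the two basis axioms are engineered to match up with universality and intersection-closure.

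First I would establish universality. Since $\mathcal{B} \neq \varnothing$ by the first basis axiom, I can choose some $D \in \mathcal{B}$. As $D \subseteq X$ holds trivially, the defining condition for membership in $\mathcal{F}$ is met with witness $D$, so $X \in \mathcal{F}$. It is worth emphasising that this is exactly the point where nonemptiness of $\mathcal{B}$ is used: without it, $\mathcal{F}$ could be empty and fail to contain $X$.

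Next I would check upward closure. Suppose $S \in \mathcal{F}$ and $S \subseteq T \subseteq X$. By the definition of $\mathcal{F}$ there is some $D \in \mathcal{B}$ with $D \subseteq S$, and transitivity of inclusion then gives $D \subseteq S \subseteq T$, so the same witness $D$ shows $T \in \mathcal{F}$. This step uses only the structure of $\mathcal{F}$ and no basis axiom at all.

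Finally, for closure under intersection, suppose $S, T \in \mathcal{F}$, so there exist $D_1, D_2 \in \mathcal{B}$ with $D_1 \subseteq S$ and $D_2 \subseteq T$. Here the second basis axiom is essential: it supplies a $W \in \mathcal{B}$ with $W \subseteq D_1 \cap D_2$, whence $W \subseteq D_1 \cap D_2 \subseteq S \cap T$, so $S \cap T \in \mathcal{F}$ with witness $W$. There is no genuine obstacle to overcome; the directedness axiom of a filter basis is built precisely so that the intersection axiom of a filter goes through, and recognising this correspondence is the whole point of the statement.
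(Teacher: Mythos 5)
Your proof is correct and complete: the paper explicitly omits this proof as elementary, and your direct verification of the three filter axioms (nonemptiness of $\mathcal{B}$ giving universality, transitivity of inclusion giving upward closure, and the directedness axiom giving closure under intersection) is exactly the intended standard argument.
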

The filter $\mathcal{F}$ from Lemma~\ref{lem-induced-filter-from-basis} is called the \emph{filter induced by $\mathcal{B}$}. 
\subsection{Group Filter Bases}
At this point, we can come up with a strategy for obtaining a topology on a group from a filter basis. The na\"ive strategy is as follows:
\begin{enumerate}
    \item Start with a filter basis $\mathcal{B}$ on a group $G$.
    \item For each $g \in G$, write $g\cdot \mathcal{B}$ for the collection of cosets $\{g\cdot D : D \in \mathcal{B}\}$. 
    \item It is easy to see that each $g\cdot\mathcal{B}$ is a filter basis on $G$, so we may define $\mathcal{N}(g)$ to be the filter induced by $g\cdot \mathcal{B}$. 
    \item Then $\mathcal{N}$ is a well-defined filter bundle on $G$, so it induces a topology. 
\end{enumerate}

Let $\mathcal{B}$ be a filter basis on a group $G$, and define a topology on $G$ as outlined above. This topology is called the \emph{topology induced by $\mathcal{B}$}. 

Each filter basis on $G$ does induce a topology, but that topology will not in general make $G$ into a topological group, since multiplication and inversion may not be continuous. We can remedy this by imposing the following additional conditions on our filter basis.
\begin{definition}
    \label{def-group-filter-basis}
    Let $G$ be a group. A \emph{group filter basis} on $G$ is a filter basis $\mathcal{B}$ on $G$ such that:
    \begin{enumerate}
        \item $1 \in U$ for all $U \in \mathcal{B}$,
        \item for all $U \in \mathcal{B}$, there is some $V \in \mathcal{B}$ with $V\cdot V \subseteq U$. 
        \item for all $U \in \mathcal{B}$, there is some $V \in \mathcal{B}$ with $V \subseteq U^{-1}$,
        \item for all $x \in G$ and $U \in \mathcal{B}$, there is some $V \in \mathcal{B}$ with $x V x^{-1} \subseteq U$. 
    \end{enumerate}
\end{definition}
Group filter bases were first defined in \cite{MR1726779}, Pages 222-223, and the discussion preceding that definition explains how to define the induced group topology on $G$, as we have done here. In {\tt mathlib}, group filter bases were formalised by Patrick Massot. 

It turns out that these are the conditions we need for the induced topology on $G$ to make $G$ into a topological group, as we see from the following theorem. The proof is trickier than most of the surrounding lemmas, but still elementary, so we omit it. 

\begin{theorem}
    Let $\mathcal{B}$ be a group filter basis on a group $G$. Then the topology induced by $\mathcal{B}$ makes $G$ into a topological group. 
\end{theorem}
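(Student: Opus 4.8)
The plan is to use the standard criterion that a map is continuous precisely when the preimage of every neighbourhood of an image point is a neighbourhood of the source point; since the product topology on $G\times G$ admits the sets $U\times V$ as a neighbourhood basis, continuity of multiplication at $(a,b)$ amounts to finding, for each neighbourhood $W$ of $ab$, neighbourhoods $U$ of $a$ and $V$ of $b$ with $UV\subseteq W$. The crucial preliminary step, which I expect to be the main obstacle, is to identify the abstract filters $\mathcal{N}(g)$ with the genuine neighbourhood filters of the induced topology $\mathcal{T}$. One inclusion is immediate: any open set $O$ containing $g$ lies in $\mathcal{N}(g)$ by the very definition of $\mathcal{T}$, so by upward closure every neighbourhood of $g$ lies in $\mathcal{N}(g)$. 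For the reverse inclusion I would show that every $N\in\mathcal{N}(g)$ is a neighbourhood by exhibiting an open set between $g$ and $N$, namely the interior $\mathrm{int}(N)=\{x\in G : N\in\mathcal{N}(x)\}$.

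The verification here is where axioms (1) and (2) of a group filter basis do the work. For $\mathrm{int}(N)\subseteq N$: if $x\in\mathrm{int}(N)$ then $xD_x\subseteq N$ for some $D_x\in\mathcal{B}$, and $x=x\cdot 1\in xD_x$ since $1\in D_x$ by axiom (1), so $x\in N$. For openness: given $x\in\mathrm{int}(N)$, choose $E\in\mathcal{B}$ with $E\cdot E\subseteq D_x$ by axiom (2); then for every $e\in E$ one checks $xe\,E\subseteq xEE\subseteq xD_x\subseteq N$, so $N\in\mathcal{N}(xe)$, i.e. $xe\in\mathrm{int}(N)$, giving $xE\subseteq\mathrm{int}(N)$. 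Hence $\mathrm{int}(N)$ is open, contains $g$, and lies in $N$, so $N$ is a neighbourhood of $g$. Consequently $\mathcal{N}(g)$ is exactly the neighbourhood filter at $g$, and in particular the cosets $gD$ with $D\in\mathcal{B}$ form a neighbourhood basis at $g$.

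With this identification in hand, continuity of multiplication reduces to coset bookkeeping. Given a basic neighbourhood $abD$ of $ab$ with $D\in\mathcal{B}$, I would use axiom (2) to pick $D'\in\mathcal{B}$ with $D'D'\subseteq D$, and then axiom (4), applied with the element $b^{-1}$, to pick $D_1\in\mathcal{B}$ with $b^{-1}D_1b\subseteq D'$. Taking the neighbourhoods $U=aD_1$ of $a$ and $V=bD'$ of $b$, the computation $UV=aD_1bD'=ab\,(b^{-1}D_1b)\,D'\subseteq ab\,D'D'\subseteq abD$ gives $UV\subseteq abD\subseteq W$, as required.

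Continuity of inversion is handled symmetrically. Given a basic neighbourhood $a^{-1}D$ of $a^{-1}$, I would first use axiom (4) with the element $a$ to obtain $W\in\mathcal{B}$ with $aWa^{-1}\subseteq D$, and then axiom (3) to obtain $D'\in\mathcal{B}$ with $D'\subseteq W^{-1}$, equivalently $D'^{-1}\subseteq W$. Setting $M=aD'$, a neighbourhood of $a$, we have $M^{-1}=D'^{-1}a^{-1}$, and since $aD'^{-1}a^{-1}\subseteq aWa^{-1}\subseteq D$ we conclude $M^{-1}\subseteq a^{-1}D$. Thus the preimage under inversion of every basic neighbourhood of $a^{-1}$ contains a neighbourhood of $a$, completing the proof that both operations are continuous and hence that $\mathcal{T}$ makes $G$ into a topological group. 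The only genuinely delicate part is the neighbourhood identification of the first two paragraphs; the continuity arguments themselves are routine manipulations of cosets once the four axioms are available in the form stated.
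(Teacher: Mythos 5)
Your proof is correct, and there is nothing in the paper to compare it against: the paper explicitly omits the proof of this theorem (calling it ``trickier than most of the surrounding lemmas, but still elementary'') and defers to Bourbaki's treatment of group filter bases. Your argument --- first identifying $\mathcal{N}(g)$ with the genuine neighbourhood filter of $g$ via the interior set $\{x \in G : N \in \mathcal{N}(x)\}$, which is exactly where axioms (1) and (2) are needed, and then handling continuity of multiplication and inversion by coset manipulation using axioms (2), (3), and (4) --- is the standard route the omitted proof would take, and every step checks out.
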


\begin{lemma}
    \label{lem-gfb}
    Let $L/K$ be any extension of fields, and define the set 
    $$
    \mathcal{B} = \{\Gal(L/F) : F/K \text{ is a finite subextension of $L/K$}\}.
    $$
    Then $\mathcal{B}$ is a group filter basis for $\Gal(L/K)$.
\end{lemma}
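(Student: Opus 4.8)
The plan is to verify, one at a time, the two axioms making $\mathcal{B}$ a filter basis and then the four extra conditions of Definition~\ref{def-group-filter-basis}. The organising observation throughout is that every element of $\mathcal{B}$ is not merely a subset but a \emph{subgroup} of $\Gal(L/K)$; consequently three of the four group-filter-basis conditions will follow formally from subgroup closure, and the only genuinely field-theoretic input will be the behaviour of fixing subgroups under the compositum operation and under conjugation.

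First I would dispatch the filter basis axioms. Nonemptiness is immediate, since $K$ is a finite subextension of itself (of degree $1$), so $\Gal(L/K)\in\mathcal{B}$. For the directedness axiom, given $\Gal(L/F_1)$ and $\Gal(L/F_2)$ I would take $W=\Gal(L/F_1F_2)$, where $F_1F_2$ is the compositum. As $F_1$ and $F_2$ are finite over $K$, so is $F_1F_2$ (with $[F_1F_2:K]\le[F_1:K][F_2:K]$), hence $W\in\mathcal{B}$; and since any automorphism fixing $F_1F_2$ fixes both $F_1$ and $F_2$, we get $W\subseteq\Gal(L/F_1)\cap\Gal(L/F_2)$. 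This is exactly the containment already recorded after the definition of the Krull topology.

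Conditions (1)--(3) of Definition~\ref{def-group-filter-basis} are then essentially free. For (1), the identity automorphism fixes every element of $L$, so $1\in\Gal(L/F)$ for each $F$. For (2) and (3), given $U=\Gal(L/F)$ I would simply take $V=U$: because $U$ is a subgroup we have $U\cdot U=U\subseteq U$ and $U^{-1}=U\subseteq U^{-1}$, so both conditions hold with the same set.

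The hard part will be condition (4), conjugation-invariance. Given $\sigma\in\Gal(L/K)$ and $U=\Gal(L/F)$, the key identity to establish is
$$
\sigma\,\Gal(L/F')\,\sigma^{-1}=\Gal(L/\sigma(F'))
$$
for any intermediate field $F'$, which I would prove by direct computation: for $h$ fixing $F'$ and $y=\sigma(f)$ with $f\in F'$, one checks $(\sigma h\sigma^{-1})(y)=\sigma(h(f))=\sigma(f)=y$, and conversely. Choosing $F'=\sigma^{-1}(F)$ gives $\sigma(F')=F$, so that $\sigma V\sigma^{-1}=U\subseteq U$ with $V=\Gal(L/\sigma^{-1}(F))$. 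The remaining point---and the one that actually uses the hypothesis that $\sigma$ is a $K$-algebra isomorphism---is that this $V$ really lies in $\mathcal{B}$: since $\sigma^{-1}$ fixes $K$ pointwise and is $K$-linear and bijective, $\sigma^{-1}(F)$ is again an intermediate field of $L/K$ with $[\sigma^{-1}(F):K]=[F:K]<\infty$, hence a finite subextension. This finiteness under conjugation is where the verification has genuine content; everything else reduces to the subgroup structure of the fixing subgroups.
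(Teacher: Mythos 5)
Your proposal is correct and matches the approach the paper takes: the paper omits the proof as elementary, but its surrounding discussion identifies exactly your decomposition --- nonemptiness and directedness via the compositum $F_1F_2$ (with the finite-dimensionality of the join singled out, in the implementation section, as the one genuinely substantive input, formalised as \lstinline{intermediate_field.finite_dimensional_sup}), the subgroup structure disposing of the multiplication and inversion axioms, and conjugation handled through the identity $\sigma\Gal(L/F')\sigma^{-1} = \Gal(L/\sigma(F'))$ together with the fact that the image of a finite subextension under a $K$-algebra isomorphism is again a finite subextension. Your observation that the conjugation and compositum steps carry all the field-theoretic content, while the rest is formal, is precisely the point the paper makes in Remark~\ref{rmk-category-theory} when it abstracts the construction to arbitrary categories.
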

For an extension $L/K$, the group filter basis from Lemma~\ref{lem-gfb} is called the \emph{standard group filter basis} on $\Gal(L/K)$. 
\begin{lemma}
    Let $L/K$ be a Galois extension of fields. The topology induced by the standard group filter basis on $\Gal(L/K)$ is equal to the Krull topology. 
\end{lemma}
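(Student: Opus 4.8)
The plan is to prove the two topologies are equal by showing they consist of exactly the same open sets, reducing each to a common pointwise membership condition on cosets. First I would record the characterization of the Krull-open sets that is already implicit in the discussion following the definition of the Krull topology: since the basic cosets $\sigma\Gal(L/F)$ form a basis (their pairwise intersections being unions of cosets), a set $U\subseteq\Gal(L/K)$ is Krull-open if and only if for every $\sigma\in U$ there is a finite subextension $F/K$ with $\sigma\Gal(L/F)\subseteq U$. The forward direction uses that $U$ is a union of cosets together with the fact that $\sigma\in\tau\Gal(L/F)$ forces $\sigma\Gal(L/F)=\tau\Gal(L/F)$; the reverse direction simply writes $U=\bigcup_{\sigma\in U}\sigma\Gal(L/F_\sigma)$.

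Next I would unwind the definition of the topology induced by the standard group filter basis $\mathcal{B}=\{\Gal(L/F):F/K\text{ finite}\}$. By Lemma~\ref{lem-top-induced-by-filters}, a set $U$ is open in this topology precisely when $U\in\mathcal{N}(\sigma)$ for every $\sigma\in U$, where $\mathcal{N}(\sigma)$ is the filter induced (via Lemma~\ref{lem-induced-filter-from-basis}) by the translated basis $\sigma\cdot\mathcal{B}=\{\sigma\Gal(L/F):F/K\text{ finite}\}$. By the description of an induced filter, $U\in\mathcal{N}(\sigma)$ means exactly that some $D\in\sigma\cdot\mathcal{B}$ satisfies $D\subseteq U$, i.e.\ that there is a finite subextension $F/K$ with $\sigma\Gal(L/F)\subseteq U$.

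Comparing the two paragraphs, the condition for $U$ to be open in the induced topology is word-for-word the condition for $U$ to be Krull-open, so the two topologies have the same open sets and are therefore equal. The bulk of the work is bookkeeping rather than a genuine obstacle: one must peel back the two layers of definitions (first the filter bundle of Lemma~\ref{lem-top-induced-by-filters}, then the induced filter from a basis of Lemma~\ref{lem-induced-filter-from-basis}) and check that the generic element of $\sigma\cdot\mathcal{B}$ is literally a coset $\sigma\Gal(L/F)$, using $1\in\Gal(L/F)$ so that $\sigma\in\sigma\Gal(L/F)$ when translating between the ``union of cosets'' and pointwise formulations. Because the resulting characterizations are symmetric, both inclusions fall out simultaneously; no use of the Galois hypothesis is actually needed here beyond the fact that it is the setting in which the Krull topology was defined.
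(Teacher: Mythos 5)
Your proof is correct. The paper never supplies an argument for this lemma --- it is one of the results declared elementary and deliberately omitted in Section~\ref{section-maths} --- so there is no proof of record to diverge from; your reduction of both topologies to the common pointwise condition, that $U$ is open if and only if every $\sigma \in U$ admits a finite subextension $F/K$ with $\sigma\Gal(L/F) \subseteq U$, is exactly the routine definition-unwinding the authors had in mind, and it correctly leans on the paper's own observation that the cosets $\sigma\Gal(L/F)$ form a basis. Your closing remark that the Galois hypothesis is never actually used also matches the paper's broader point that this construction makes sense for arbitrary field extensions.
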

\subsection{Proof that the Krull Topology is Profinite}
\label{subsection-proof-of-profinite-informal}
We have written Lean proofs that the Krull topology is Hausdorff, totally disconnected, and compact. The former two properties are elementary, so we will not explain their proofs here. Proving compactness is more difficult, and this subsection is devoted to explaining its proof informally. 

Compactness has an equivalent characterisation involving filters. Before we can state this criterion (which we do in Theorem~\ref{theorem-compact-iff-ultrafilter}), we need two more definitions. 
\begin{definition}
    Let $X$ be a topological space. An \emph{ultrafilter on $X$} is a filter $\mathcal{F}$ on $X$ satisfying the following two axioms:
    \begin{enumerate}
        \item $\varnothing \not \in \mathcal{F}$,
        \item for any filter $\mathcal{G}$ on $X$ with $\varnothing\not\in\mathcal{G}$ and $\mathcal{F}\subseteq \mathcal{G}$, we have $\mathcal{G} = \mathcal{F}$.
    \end{enumerate}
\end{definition}
\begin{definition}
    Given a point $x$ of a topological space $X$, the \emph{neighbourhood filter of $x$} is the filter 
    $$
    \mathcal{N}(x) = \{N \subseteq X: \text{ there is some open $U\subseteq X$ with $x \in U \subseteq N$}\}.
    $$
\end{definition}
The following theorem gives a convenient equivalent condition for a topological space to be compact. 
\begin{theorem}
    \label{theorem-compact-iff-ultrafilter}
    Let $X$ be a topological space. The following are equivalent:
    \begin{enumerate}
        \item $X$ is compact, 
        \item for every ultrafilter $\mathcal{F}$ on $X$, there is some $x \in X$ such that $\mathcal{N}(x) \subseteq \mathcal{F}$. 
    \end{enumerate}
\end{theorem}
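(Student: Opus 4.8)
The plan is to prove both implications through the characterisation of compactness by the \emph{finite intersection property}: a space $X$ is compact if and only if every collection of closed subsets, each of whose finite subcollections has nonempty intersection, itself has nonempty total intersection. This dual form of the open-cover definition meshes naturally with the filter axioms, since a filter is by definition closed under finite intersections and contains only ``large'' sets. Throughout, I will repeatedly use the defining feature of an ultrafilter $\mathcal{F}$: for every subset $A \subseteq X$, exactly one of $A$ and $X \setminus A$ belongs to $\mathcal{F}$. (Both cannot lie in $\mathcal{F}$, else $\varnothing \in \mathcal{F}$; and if neither did, one could enlarge $\mathcal{F}$ to a strictly larger proper filter, contradicting maximality.)

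For the implication $(1) \Rightarrow (2)$, suppose $X$ is compact and let $\mathcal{F}$ be an ultrafilter. First I would consider the family of closures $\{\overline{A} : A \in \mathcal{F}\}$. These are closed, and they inherit the finite intersection property from $\mathcal{F}$: given $A_1, \dots, A_n \in \mathcal{F}$, the intersection $A_1 \cap \cdots \cap A_n$ lies in $\mathcal{F}$ and hence is nonempty, so the larger set $\overline{A_1} \cap \cdots \cap \overline{A_n}$ is nonempty too. By compactness there is a point $x \in \bigcap_{A \in \mathcal{F}} \overline{A}$. It remains to check $\mathcal{N}(x) \subseteq \mathcal{F}$. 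Given $N \in \mathcal{N}(x)$, choose an open $U$ with $x \in U \subseteq N$; by upward closure it suffices to show $U \in \mathcal{F}$. If $U \notin \mathcal{F}$, then $X \setminus U \in \mathcal{F}$ by the ultrafilter property, so $x$ lies in the closure $\overline{X \setminus U} = X \setminus U$ (this set is already closed), contradicting $x \in U$. Hence $U \in \mathcal{F}$ and therefore $N \in \mathcal{F}$.

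For the converse $(2) \Rightarrow (1)$, I would verify the finite intersection characterisation directly. Let $\{C_i\}_{i \in I}$ be a family of closed sets such that every finite subfamily has nonempty intersection. The collection of all finite intersections $C_{i_1} \cap \cdots \cap C_{i_n}$ is then a filter basis of nonempty sets, so by Lemma~\ref{lem-induced-filter-from-basis} it induces a proper filter $\mathcal{G}$, proper since no member contains $\varnothing$. The key external input is the \emph{ultrafilter lemma}: every proper filter is contained in some ultrafilter. Let $\mathcal{F} \supseteq \mathcal{G}$ be such an ultrafilter. By hypothesis $(2)$ there is an $x$ with $\mathcal{N}(x) \subseteq \mathcal{F}$, and I claim $x \in C_i$ for every $i$. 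Each $C_i$ belongs to $\mathcal{G} \subseteq \mathcal{F}$; if we had $x \notin C_i$, then $X \setminus C_i$ would be an open neighbourhood of $x$, hence in $\mathcal{N}(x) \subseteq \mathcal{F}$, forcing $\varnothing = C_i \cap (X \setminus C_i) \in \mathcal{F}$, which is impossible. Thus $x \in \bigcap_{i} C_i$, the intersection is nonempty, and $X$ is compact.

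The step I expect to be the main obstacle is the reliance on the ultrafilter lemma in $(2) \Rightarrow (1)$: producing an ultrafilter extending $\mathcal{G}$ is nonconstructive and is the only genuinely nontrivial ingredient, drawing on the axiom of choice via Zorn's lemma. In a formalisation this is unproblematic, since \texttt{mathlib} already supplies ultrafilters together with their extension property, but it is worth flagging that the equivalence is not choice-free. The remaining delicacy is purely bookkeeping: ensuring the interplay between ``$x$ lies in the closure of every member of $\mathcal{F}$'' and ``$\mathcal{F}$ contains a complement'' is handled correctly, which hinges entirely on the defining dichotomy of ultrafilters.
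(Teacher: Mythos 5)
Your proof is correct, but it takes a genuinely different route from the paper, which in fact offers no argument at all: the paper simply cites Theorem 2.2.5 of a standard reference, and in the formalisation it leans on {\tt mathlib}'s \lstinline{is_compact_iff_ultrafilter_le_nhds}. What you have written is the classical self-contained proof. For $(1)\Rightarrow(2)$ you use the finite intersection property to locate a point $x \in \bigcap_{A\in\mathcal{F}}\overline{A}$ and then the ultrafilter dichotomy to upgrade ``$x$ is a cluster point'' to ``$\mathcal{N}(x)\subseteq\mathcal{F}$''; for $(2)\Rightarrow(1)$ you extend the filter generated by a family of closed sets with the finite intersection property to an ultrafilter and let the hypothesis produce a point of the total intersection. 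Both directions are sound. Your approach buys logical transparency: it isolates exactly where choice enters (the ultrafilter lemma in $(2)\Rightarrow(1)$, which you correctly flag as the only nonconstructive ingredient), whereas the paper's citation buys brevity, appropriate since this equivalence is standard and already formalised. Two points of bookkeeping you may wish to tighten if this were written out in full: first, your parenthetical justification of the dichotomy should note that if $X\setminus A\notin\mathcal{F}$ then every member of $\mathcal{F}$ meets $A$ (else upward closure would put $X\setminus A$ in $\mathcal{F}$), which is what makes the enlarged filter proper; second, in $(2)\Rightarrow(1)$ the collection of finite intersections should be taken to include the empty intersection $X$ (or the family $\{C_i\}$ assumed nonempty) so that it satisfies the nonemptiness axiom of a filter basis. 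Neither is a gap, just detail worth recording.
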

\begin{proof}
    This is Theorem 2.2.5 of \cite{MR1932358}.
\end{proof}
Following Theorem~\ref{theorem-compact-iff-ultrafilter}, the key idea in our proof of compactness is as follows: take an arbitrary ultrafilter $\mathcal{F}$ on $\Gal(L/K)$ and construct an element $\sigma \in \Gal(L/K)$ such that $\mathcal{N}(\sigma) \subseteq \mathcal{F}$. For each point $x\in L$ and each finite normal subextension $F/K$ containing $x$, we will construct a $K$-algebra homomorphism $\varphi_{F,x}:F \to L$, satisfying certain properties. We will then glue these ``local'' $K$-algebra homomorphisms to define a ``global'' $K$-algebra isomorphism $\sigma:L\to L$. 

So, let $\mathcal{F}$ be an ultrafilter on $\Gal(L/K)$. Let $x \in L$ and let $F/K$ be a finite subextension containing $x$. There is a restriction map $p :\Gal(L/K) \to \Hom_K(F,L)$, and we obtain a \emph{pushforward} $p_*\mathcal{F}$ of $\mathcal{F}$ along $p$, which consists of sets whose preimages under $p$ are in $\mathcal{F}$. It is easy to see that the pushforward of an ultrafilter is an ultrafilter, and also that ultrafilters on finite sets are always principal\footnote{We have not officially defined this. It just means that the ultrafilter consists of all sets containing a designated element, called the ``generator'' of the ultrafilter.}. It follows that $p_*\mathcal{F}$ is generated by some element ${\varphi}_{F,x} \in \Hom_K(F,L)$. It turns out that the element $\varphi_{F,x}(x)\in L$ is independent of the choice of $F$, so there is a well-defined element $\sigma(x) \in L$ such that $\sigma(x) = \varphi_{F,x}(x)$ for all finite subextensions $F$ containing $x$. This allows us to define a function $\sigma:L \to L$. 

Now, for any elements $x,y \in L$, setting $F = K(x,y)$ shows that $\sigma(x + y) = \sigma(x) + \sigma(y)$, and similarly for products. Therefore, $\sigma$ is a ring homomorphism. Clearly $\sigma$ fixes $K$, so it is a $K$-algebra homomorphism. 

We need to show that $\sigma$ is actually a $K$-algebra isomorphism $L \to L$. Field homomorphisms are always injective, so we only need to show that $\sigma$ is surjective. Let $y \in L$ and write $f_y$ for its minimal polynomial over $K$. Let $y_1, y_2,\ldots, y_n$ be the roots of $f_y$ in $L$, and let $F = K(y_1,\ldots, y_n)$. Then $F/K$ is a finite extension and $\varphi_{F,y}$ is a $K$-algebra homomorphism $F \to F$, which means that $\varphi_{F,y}$ gives an isomorphism $F \to F$. In particular, there is some $x \in F \subseteq L$ such that $\varphi_{F,y}(x) = y$, so $\sigma(x) = y$. Therefore, $\sigma$ is surjective. 

Now, if $U \in \mathcal{N}(\sigma)$, then by definition of the Krull topology, there is some finite subextension $F/K$ with $\sigma\cdot \Gal(L/F) \subseteq U$. Now, if $p:\Gal(L/K)\to \Hom_K(F,L)$ is the restriction map, then by definition of $\sigma$, the ultrafilter $p_*\mathcal{F}$ is generated by $\sigma|_F$, which means that $\{\sigma|_F\} \in p_*\mathcal{F}$. By definition of pushforwards, the set 
$$
p^{-1}(\{\sigma|_F\}) = \sigma\Gal(L/F)
$$
is in $\mathcal{F}$, which means that $U$ is too, by upward closure. Therefore we have $\mathcal{N}(\sigma) \subseteq \mathcal{F}$, so Theorem~\ref{theorem-compact-iff-ultrafilter} tells us that $\Gal(L/K)$ is a compact topological space. 

\section{Implementation in Lean}
\label{section-lean}

In this section, we give an overview of our implementation in Lean. We start by explaining how we defined the Krull topology and then discuss our proof that it is profinite. 
\subsection{Type Theory Basics}
Instead of set theory, Lean is based on the formalism of \emph{dependent type theory}. Roughly speaking, one can think of a \emph{type} as a collection of things, like a set. Instead of being called elements, the things inside a type are called \emph{terms}. Given a type \lstinline{X}, we write \lstinline{x : X} to say that \lstinline{x} is a term of type \lstinline{X}. In keeping with the analogy to sets, we essentially write \lstinline{:} to mean $\in$. 

For any pair \lstinline{X, Y} of types, there is another type \lstinline{X → Y} of \emph{functions from \lstinline{X} to \lstinline{Y}}. These do what we expect them to; they assign a term of \lstinline{Y} to each term of \lstinline{X}. 

There is a special sort of type called a \emph{proposition}. A proposition can have at most one term. If the proposition has a term, we say that it is \emph{true}, and \emph{false} otherwise. If a proposition \lstinline{P} is true, then we call the unique term \lstinline{p : P} the \emph{proof of \lstinline{P}}. The fact that propositions have at most one term is called \emph{proof irrelevance}, and it is a foundational design choice of Lean. 

Moreover, for any propositions \lstinline{P} and \lstinline{Q}, the type \lstinline{P → Q} of functions from \lstinline{P} to \lstinline{Q} is also a proposition. If the proposition \lstinline{P → Q} is true (i.e. if there exists a function from \lstinline{P} to \lstinline{Q}), then we say that \lstinline{P} \emph{implies} \lstinline{Q}.  Note that this is just a formalism - it is not obvious that these notions of propositions, truth, and implication actually align with our conventional understanding of the words. It turns out though that traditional logic is naturally emergent from the formalism. 

For example, suppose that we have propositions \lstinline{P,Q}, and \lstinline{R}, such that \lstinline{P} implies \lstinline{Q} and \lstinline{Q} implies \lstinline{R}. By definition, the propositions \lstinline{P → Q} and \lstinline{Q → R} are true. Let \lstinline{f : P → Q} and \lstinline{g : Q → R} be the proofs of these propositions. Then, composing functions, we obtain a term \lstinline{g ∘ f} of type \lstinline{P → R}. That is, we have constructed a proof of \lstinline{P → R}, which means that \lstinline{P → R} is true, so \lstinline{P} implies \lstinline{R}. Therefore, we have recovered the intuitive notion of transitivity of implication from our formal definitions. 

\subsection{Definition of the Krull Topology}
Let $L/K$ be a field extension, not necessarily Galois. In Lean, we define the Krull topology on $\Gal(L/K)$ to be the topology generated by the standard group filter basis on $\Gal(L/K)$. In the {\tt mathlib} API, the Galois group of $L/K$ is denoted \lstinline{L ≃ₐ[K] L}.

\begin{remark} 
    \label{rmk-function-vs-alg-hom}
    Understanding the notation \lstinline{L ≃ₐ[K] L} sheds light on the way that {\tt{mathlib}} is organised, so we will take a  moment to explain it. In {\tt{mathlib}}, we write \lstinline{L → L} for the type of functions from $L$ to $L$. These functions do not see the additional structure of $L$ as a $K$-algebra. On the other hand, we write \lstinline{L →ₐ[K] L} for the type of $K$-algebra homomorphisms from $L$ to $L$. Mathematicians typically think of an algebra homomorphism as being ``the same thing'' as its underlying function; a homomorphism is just a function satisfying some additional conditions. In {\tt{mathlib}}, however, an algebra homomorphism is a different object from its underlying function. The algebra homomorphism \emph{contains} the underlying function, along with proofs that the function satisfies the required properties. 

    Finally then, \lstinline{L ≃ₐ[K] L} denotes the type of $K$-algebra \emph{equivalences} from $L$ to $L$. Again, in {\tt{mathlib}}, a $K$-algebra equivalence is different from a bijective homomorphism. A term of \lstinline{L ≃ₐ[K] L} consists of the following data:
    \begin{enumerate}
        \item A term \lstinline{to_fun} of type \lstinline{L → L},
        \item a term \lstinline{inv_fun} of type \lstinline{L → L},
        \item proofs that \lstinline{to_fun} and \lstinline{inv_fun} are mutual inverses, and also that they satisfy the properties of $K$-algebra homomorphisms.
    \end{enumerate} 
    These different data structures give insight into why formalising mathematics is difficult. When using Lean, we often have to keep track of distinctions between objects that we intuitively consider to be ``the same'', but whose implementations are fundamentally different objects. 
\end{remark}

For each intermediate field $E$ of $L/K$, the subgroup of terms $\sigma$ of \lstinline{L ≃ₐ[K] L} fixing $E$ is called \lstinline{E.fixing_subgroup}. Upon encountering this notation for the first time, one might ask how Lean knows that \lstinline{E.fixing_subgroup} is in fact a subgroup of \lstinline{L ≃ₐ[K] L}. That is, the subgroup depends on $L$ and $K$, whereas its definition mentions only $E$. The explanation is that $E$ is a term of type \lstinline{intermediate_field K L}, so we can recover $K$ and $L$ by looking at the type of $E$. 
\begin{remark} 
    Note that \lstinline{L ≃ₐ[K] L} and \lstinline{K.fixing_subgroup} are different objects in {\tt mathlib}; although they both represent the Galois group of $L/K$, the former has type \lstinline{Type u} for some universe \lstinline{u}, while the latter has type \lstinline{set(L ≃ₐ[K] L)}.
\end{remark}
In order to define the standard group filter basis, we define \lstinline{finite_exts K L} to be the set of intermediate fields $F$ of $L/K$ such that $F/K$ is finite dimensional: 
\begin{lstlisting}
def finite_exts (K : Type*) [field K] (L : Type*) [field L] 
  [algebra K L] :
  set (intermediate_field K L) :=
{E | finite_dimensional K E}
\end{lstlisting}
Subsequently, we define the set \lstinline{fixed_by_finite K L} consist of subsets of the form \lstinline{F.fixing_subgroup}, as \lstinline{F} ranges over \lstinline{finite_exts K L}:
\begin{lstlisting}
def fixed_by_finite (K L : Type*) [field K] [field L] 
  [algebra K L]: set (subgroup (L ≃ₐ[K] L)) :=
intermediate_field.fixing_subgroup '' (finite_exts K L)
\end{lstlisting}
\begin{remark}
    The \lstinline{''} in the definition of \lstinline{fixed_by_finite} denotes the image of a set under a function. In general, if \lstinline{X} and \lstinline{Y} are types, \lstinline{f : X → Y} is a function, and \lstinline{S : set X} is a set of terms of \lstinline{X}, then \lstinline{f '' S} denotes the image of \lstinline{S} under the function \lstinline{f}. 
\end{remark}

The elements of \lstinline{fixed_by_finite K L} are then precisely the elements of the standard group filter basis. However, as far as Lean is concerned, \lstinline{fixed_by_finite K L} \emph{is not} a group filter basis, but merely a set of subgroups of \lstinline{L ≃ₐ[K] L}. In {\tt{mathlib}}, a term of type \lstinline{group_filter_basis K L} consists of the following data:
\begin{enumerate}
    \item A term of type \lstinline{filter_basis K L},
    \item four proofs, showing that the filter basis in question satisfies the additional axioms of a group filter basis.
\end{enumerate}
In turn, a term of type \lstinline{filter_basis K L} consists of:
\begin{enumerate}
    \item A term of type \lstinline{set (L ≃ₐ[K] L)},
    \item two proofs, showing that the set in question satisfies the axioms of a filter basis. 
\end{enumerate}
The underlying set of our filter basis is \lstinline{fixed_by_finite K L}, which we package into the filter basis \lstinline{gal_basis K L}. Subsequently, we use \lstinline{gal_basis K L} as the underlying filter basis of the group filter basis \lstinline{gal_group_basis K L}. This hierarchy is somewhat awkward to write out in prose, so we summarise it in the following diagram:
$$
\begin{tikzcd}
    \text{Set} \arrow[d]          & \arrow[d] \text{\lstinline{fixed_by_finite K L}} \\
    \text{Filter basis} \arrow[d] & \arrow[d] \text{\lstinline{gal_basis K L}}      \\
    \text{Group filter basis}     & \text{\lstinline{gal_group_basis K L}}
    \end{tikzcd}
$$
\begin{remark}
    \label{rmk-set-vs-filter-basis-vs-grp-filter-basis}
It is important to keep track of the different types of these terms, since Lean considers them to be different objects. This can be unintuitive for mathematicians, since we would generally consider the (group) filter basis to be \emph{the same thing} as its underlying set. 
\end{remark}

We can now define the Krull topology, \lstinline{krull_topology K L}, on the group \newline \lstinline{L ≃ₐ[K] L} by:
\begin{lstlisting}
instance krull_topology (K L : Type*) [field K] [field L] 
  [algebra K L] :
  topological_space (L ≃ₐ[K] L) :=
group_filter_basis.topology (gal_group_basis K L)
\end{lstlisting}
\begin{remark}
    We defined \lstinline{krull_topology K L} as an instance, which means that the type class inference system understands it as ``the'' topology on a Galois group. This means that we can make topological statements about subsets of \lstinline{L ≃ₐ[K] L}, and the elaborator will automatically infer that we are talking about the Krull topology. 
\end{remark}
\begin{remark}
    \label{rmk-category-theory}
    Our definition of the Krull topology is valid for any field extension, not necessarily normal or separable. This is more general than definitions in the literature, which made us think about how far the generalisation could go. In fact, it can be taken much further, as we now explain. Fix a category $\mathcal{C}$, and let $L$ be an object of $\mathcal{C}$. There is a category $\operatorname{subobj}(L)$ of \emph{subobjects} of $L$. The objects of $\operatorname{subobj}(L)$ are pairs $(E,i)$, where $E \in \mathcal{C}$ and $i:E \hookrightarrow L$ is a monomorphism. The morphisms of $\operatorname{subobj}(L)$ are the obvious commutative triangles. Suppose further that there is a set $S$ of objects of $\operatorname{subobj}(\mathcal{C})$, whose members we will call \emph{small} objects of $\mathcal{C}$, with the following two axioms:
    \begin{enumerate}
        \item (The intersection of subobjects contains a subobject). For all 
        $$
        (E_1, i_1), (E_2, i_2) \in S,
        $$ 
        there is some $(E,i)\in S$ and maps $f_j : E \to E_j$ such that the diagram 
        $$
        \begin{tikzcd}
            & E_1 \arrow[d, "i_1"]  \\
E \arrow[ru, "f_1", dashed] \arrow[rd, "f_2"', dashed] \arrow[r, "i"] & L                     \\
            & E_2 \arrow[u, "i_2"']
\end{tikzcd}
        $$
        commutes.
        \item (Automorphisms preserve subobjects). For all $(E,i) \in S$ and all $\sigma \in \Aut_\mathcal{C}(L)$, we also have $(E, \sigma\circ i) \in S$.
    \end{enumerate}
    For each $(E,i) \in \operatorname{subobj}(\mathcal{C})$, define the \emph{fixing subgroup} of $(E,i)$ to be $F(E,i)\subseteq \Aut_\mathcal{C}(L)$ to be 
    $$
    F(E,i) = \{\sigma \in \Aut_\mathcal{C}(L) : \sigma\circ i = i\}. 
    $$
    Then the collection 
    $$
    \mathcal{B} = \{F(E,i) : E \in S\}
    $$
    is a group filter basis for $\Aut_\mathcal{C}(L)$, so it gives it the structure of a topological group. The Krull topology is a special case of this construction, where $\mathcal{C}$ is the category of $K$-algebras and $L$ is viewed as an object of $\mathcal{C}$. In that case, Axiom (1) comes from the fact that the intersection of two $K$-subalgebras of $L$ is a $K$-subalgebra of $L$, and Axiom (2) comes from the fact that $\sigma(E)$ is a $K$-subalgebra of $L$, for all $K$-subalgebras $E$ and all $K$-isomorphisms $\sigma:L\to L$. We did not treat this level of generality in Lean, since it seemed quite far-removed from our objective. 
\end{remark}
A surprising difficulty was in proving that \lstinline{fixed_by_finite K L} satisfies Axiom $(2)$ of Definition~\ref{def-group-filter-basis}, which is equivalent to proving that the join of two finite-dimensional field extensions is finite-dimensional. Mathematically, this is very simple; if $F/K$ and $E/K$ are finite extensions with bases $\{x_i\}$ and $\{y_j\}$ respectively, then the products $\{x_iy_j\}$ form a finite spanning set for the join $FE$, so $FE$ is finite-dimensional over $K$. Since this result is so elementary, we assumed that it must already be in {\tt mathlib}, so we asked in the Zulip chat\footnote{Much of Lean's community uses a dedicated server on the chat and collaborative software, Zulip.} about where it might be. It turned out that the result was not in {\tt mathlib}, and that proving it in Lean was actually quite difficult. Thomas Browning generously wrote the proof, and it is now in {\tt mathlib} under the name \lstinline{intermediate_field.finite_dimensional_sup}. 

The difficulty stemmed from the fact that finiteness is hard to formalise, and there are numerous ways to approach it in Lean. For example, given a type \lstinline{X}, there are types \lstinline{list X}, \lstinline{multiset X}, and \lstinline{finset X}, which all capture slightly different notions of ``a finite collection of terms of \lstinline{X}'', depending on whether we care about ordering and multiplicity. There is also a type called \lstinline{fintype X}, which has a term if and only if \lstinline{X} contains finitely many elements. On the other hand, given a term \lstinline{s} of type \lstinline{set X}, there is a proposition \lstinline{finite s}, which says that the set is finite. Each approach has pros and cons, and choosing the right tool for the job can be difficult. Moreover, we often have to manage interactions between multiple notions of finiteness, which requires a clear understanding of the relationships between them. 

\subsection{Proof of Profiniteness}
The proofs that $\Gal(L/K)$ is Hausdorff and totally disconnected are straightforward, so we will not say much about them. They are formalised in 
\begin{lstlisting} 
lemma krull_topology_t2 {K L : Type*} [field K] [field L] 
    [algebra K L] (h_int : algebra.is_integral K L) : 
t2_space (L ≃ₐ[K] L) :=
\end{lstlisting}
and 
\begin{lstlisting}
lemma krull_topology_totally_disconnected {K L : Type*} [field K] 
    [field L] [algebra K L] (h_int : algebra.is_integral K L) : 
is_totally_disconnected (set.univ : set (L ≃ₐ[K] L)) :=
\end{lstlisting}
\begin{remark}
    In the lemmas above, \lstinline{h_int} is a term of type \lstinline{is_integral K L}. Therefore, the Krull topology is Hausdorff and totally disconnected whenever the extension $L/K$ is algebraic. Note that normality and separability do not enter the picture. 
\end{remark}

Our proof of compactness, which we explained informally in Section~\ref{subsection-proof-of-profinite-informal}, is more involved. Given an ultrafilter $\mathcal{F}$ on $\Gal(L/K)$, a finite normal subextension $F/K$, and an element $x \in F$, we defined a $K$-algebra homomorphism ${\varphi}_{F,x}:F \to L$. We then glued the various $\varphi_{F,x}$ together to obtain a map $\sigma \in \Gal(L/K)$ with $\mathcal{N}(\sigma) \subseteq \mathcal{F}$. In Lean, the homomorphism ${\varphi}_{F,x}$ is defined by 
\begin{lstlisting}
protected noncomputable def ultrafilter.generator_of_pushforward
    (h_findim : finite_dimensional K E) (f : ultrafilter (L →ₐ[K] L)) : 
E →ₐ[K] L :=
classical.some $ ultrafilter.eq_principal_of_fintype _ $
    f.map $ λ σ, σ.comp $ intermediate_field.val _
\end{lstlisting}
\begin{remark}
    The definition above is labelled as \lstinline{noncomputable} because it uses Lean's axiom of choice. In particular, 
    \begin{lstlisting}
    $ ultrafilter.eq_principal_of_fintype _ $
        f.map $ λ σ, σ.comp $ intermediate_field.val _
    \end{lstlisting}
    \vspace{-1em}
    is a term of type 
    \begin{lstlisting}
    ∃ (x : F →ₐ[K] L), ↑(ultrafilter.map (λ (σ : L ≃ₐ[K] L), 
    σ.to_alg_hom.comp F.val) f) = pure x.
    \end{lstlisting}
    \vspace{-1em}
    This means that there exists \emph{some} $K$-algebra homomorphism $F \to L$ that generates the ultrafilter $p_*\mathcal{F}$ on $\Hom_K(F,L)$. However, the statement is nonconstructive\footnote{Meaning that there is no explicit formula for this homomorphism.}, so we have to invoke the axiom of choice to take a specific such algebra homomorphism, which is our ${\varphi}_{F,x}$. 
\end{remark}

Subsequently, we glue the local $K$-algebra homomorphisms ${\varphi}_{F,x}$ together to obtain the function $\sigma:L \to L$, defined by:
\begin{lstlisting}
protected noncomputable def 
    ultrafilter.glued_generators_of_pushforwards_function
    (h_int : algebra.is_integral K L) (f : ultrafilter (L →ₐ[K] L)) 
    (x : L) : 
L :=
\end{lstlisting}
Now that we have defined $\sigma$ as a function, we need to define it as a $K$-algebra homomorphism by 
\begin{lstlisting}
noncomputable def 
    ultrafilter.glued_generators_of_pushforwards_alg_hom
    (f : ultrafilter (L →ₐ[K] L)) (h_int : algebra.is_integral K L) : 
L →ₐ[K] L :=
\end{lstlisting}

Next, we prove a lemma, saying that the algebra homomorphism is bijective:
\begin{lstlisting}
lemma ultrafilter.glued_generators_of_pushforwards_alg_hom_bijection 
    (h_int : algebra.is_integral K L) (f : ultrafilter (L →ₐ[K] L)) :
  function.bijective (ultrafilter.glued_generators_of_pushforwards_alg_hom f h_int) :=
\end{lstlisting}
As we saw in Remark~\ref{rmk-function-vs-alg-hom}, {\tt mathlib} considers a $K$-algebra equivalence \lstinline{E ≃ₐ[K] L} to be different from a bijective algebra homomorphism. It consists of two \emph{different} functions $E \to L$ and $L \to E$, together with proofs that they are mutual inverses and that they satisfy the axioms of $K$-algebra homomorphisms. Thankfully, {\tt mathlib} contains a definition, \lstinline{alg_equiv.of_bijective}, which takes a bijective algebra homomorphism and constructs an algebra equivalence whose underlying function equals the underlying function of the algebra homomorphism. We include the statement of \lstinline{alg_equiv.of_bijective} for completeness:
\begin{lstlisting}
noncomputable def alg_equiv.of_bijective (f : A₁ →ₐ[R] A₂) 
    (hf : function.bijective f) : A₁ ≃ₐ[R] A₂ :=
\end{lstlisting}
Now we can finally define $\sigma$ as a term of the Galois group \lstinline{L ≃ₐ[K] L}, as follows:
\begin{lstlisting}
noncomputable def ultrafilter.glued_generators_of_pushforwards_alg_equiv 
    (h_int : algebra.is_integral K L) (f : ultrafilter (L →ₐ[K] L)) :
(L ≃ₐ[K] L) :=
alg_equiv.of_bijective (ultrafilter.glued_generators_of_pushforwards_alg_hom f h_int)
    (ultrafilter.glued_generators_of_pushforwards_alg_hom_bijection 
    h_int f)
\end{lstlisting}
All that remains is to show that this equivalence is actually a limit of the ultrafilter\footnote{Which is just the esoteric way of saying that $\mathcal{N}(\sigma)\subseteq \mathcal{F}$.}, which we do with the following lemma:
\begin{lstlisting}
lemma ultrafilter_converges_to_glued_equiv 
    (h_int : algebra.is_integral K L) (f : ultrafilter (L ≃ₐ[K] L)) :
(f : filter (L ≃ₐ[K] L)) ≤ 
    nhds (ultrafilter.glued_generators_of_pushforwards_alg_equiv h_int
    (f.map (λ (σ : L ≃ₐ[K] L), σ.to_alg_hom))) :=
\end{lstlisting}

At this point, we are pretty much done; our actual proof of compactness is the lemma: 
\begin{lstlisting}
lemma krull_topology_compact {K L : Type*} [field K] [field L] 
    [algebra K L] (h_int : algebra.is_integral K L) :
is_compact (set.univ : set (L ≃ₐ[K] L)) :=
\end{lstlisting}
This is fairly immediate from \lstinline{is_compact_iff_ultrafilter_le_nhds}, which is {\tt mathlib}'s statement of Theorem~\ref{theorem-compact-iff-ultrafilter}. Finally, we prove profiniteness by 
\begin{lstlisting}
def krull_topology_profinite {K L : Type*} [field K] [field L] 
    [algebra K L]  (h_int : algebra.is_integral K L)
    (minpoly K x) :
Profinite := 
{ to_CompHaus := krull_topology_comphaus h_int,
  is_totally_disconnected := 
  krull_topology_totally_disconnected_space h_int}
\end{lstlisting}

\section{Conlcusion and Acknowledgements}
For any field extension $L/K$, not necessarily algebraic, normal, or separable, we defined a canonical topology on the group \lstinline{L ≃ₐ[K] L}, making it into a topological group. This topology generalises the Krull topology, which is typically only defined for Galois extensions. Moreover, we proved that this topology is profinite whenever the extension is algebraic (but not necessarily normal or separable). 

Immense thanks are due to Kevin Buzzard for his support throughout the project. He has been very generous with his time and has written many articles' worth of exposition to me via Zulip messages. More generally, the {\tt mathlib} community has answered any and all questions posed in the Leanprover Zulip server. For anybody starting out in Lean, my top piece of advice is to make use of this community to the fullest. As long as you are demonstrating effort, no question is too basic! 

I am also grateful to Patrick Massot for helping me understand some technical details of filter bases and Thomas Browning for proving the \lstinline{finite_dimensional_sup} lemma, as well as everyone who has commented on my Pull Requests or replied to my Zulip questions.

\printbibliography
\end{document}